\documentclass [11pt] {article}

 \usepackage{fullpage}

\usepackage{graphics}
\usepackage[dvips]{epsfig}

\usepackage{amsmath}
\usepackage{amssymb}
\usepackage{amsfonts}
\usepackage{graphicx}

\usepackage{cite}

\usepackage{algorithmic}
\usepackage[linesnumbered,ruled]{algorithm2e}

\begin{document}

\newtheorem{theorem}{Theorem}[section]
\newtheorem{lemma}{Lemma}[section]
\newtheorem{corollary}{Corollary}[section]
\newtheorem{claim}{Claim}[section]
\newtheorem{proposition}{Proposition}[section]
\newtheorem{definition}{Definition}[section]
\newtheorem{fact}{Fact}[section]
\newtheorem{example}{Example}[section]
\newtheorem{remark}{Remark}[section]
\newtheorem{case}{Case}
\newcommand{\cV}{{\cal V}}

\newcommand{\qed}{\hfill $\square$ \smallbreak}
\newenvironment{proof}{\noindent{\bf Proof:}}{\qed}

\def\thefootnote{\fnsymbol{footnote}}

\title{{\bf Universal Rendezvous of Anonymous Agents with Footprints}}\date{}

\author{Bibhuti Das\footnotemark[1]
}

\footnotetext[1]{D\'epartement d'informatique, Universit\'e du Qu\'ebec en Outaouais, Gatineau,
Qu\'ebec J8X 3X7, Canada. {\tt dasbibhuti905@gmail.com}}

\maketitle
\thispagestyle{empty}
\vspace*{0.5cm}

\begin{abstract}
Deterministic rendezvous for two anonymous mobile agents starting simultaneously from two distinct nodes of an anonymous connected graph and navigating synchronously in the graph requires that they meet at some node.
 An instance of the rendezvous problem is the underlying graph, together with two distinct nodes that are the initial positions of the agents. Such an instance is said to be {\em feasible} if there is a deterministic algorithm guaranteeing rendezvous, possibly valid only for this instance. A rendezvous algorithm is said to be {\em universal} for a class of instances if it guarantees rendezvous for all feasible instances from this class.
 
We consider the model with footprints: whenever an agent visits an unmarked node, it leaves a permanent {\em footprint} on it, and all footprints are identical. This paper aims at answering the open problem from the paper by Das and Pelc (SPAA
2026), asking whether there exists a universal rendezvous algorithm for the class of all instances in the model with footprints. We propose a universal rendezvous algorithm for the class of all instances where the underlying graph is connected (finite or countably infinite) in the model with footprints. This shows the existence of a universal algorithm for the class of all instances in the model with footprints, which is an affirmative answer to the open problem.

\vspace*{0.5cm}

\noindent
{\bf keywords:} anonymous graph, anonymous agent, rendezvous, footprints, universal algorithm

\vspace*{0.5cm}
\end{abstract}

\newpage

\section{Introduction}
The task of deterministic rendezvous for two anonymous mobile agents navigating synchronously in an anonymous network, modeled as a connected, unlabeled graph, calls for their meeting at some node, by executing the same deterministic algorithm. The initial nodes of the agents in the network are chosen by an adversary. The agents need to meet at some node in the same round to solve rendezvous. The methodological importance of rendezvous comes from the fact that it is a symmetry-breaking task equivalent to the fundamental distributed task of leader election between the agents (cf., e.g., \cite{PY}), which asks for one of the agents to become {\em leader} and the other a {\em non-leader}. In computer networks, the mobile agents may be software agents navigating in a network to consult
a distributed database whose parts are located in nodes of the network. They can also be viewed as the mobile
robots navigating in a network of corridors of a contaminated mine to collect samples of the ground or of the air. In both cases, the task of rendezvous 
may be needed to exchange data collected so far for planning future actions.

\subsection{The model and the problem}
We consider a network represented by a simple  undirected connected graph $G=(V,E)$ that can be either finite or countably infinite. The degrees of all nodes are finite but there is no global known upper bound on node degrees. Nodes are unlabeled, but ports at each individual node of degree $d$ are numbered by integers $0,1,\dots, d-1$. There is no coherence between port numbers at two extremities of an edge. Two mobile agents start at different nodes of the network, called their {\em bases} chosen by an adversary. The agents start simultaneously and navigate synchronously in the network. To achieve rendezvous, they must meet at some node in the network in the same round. 

An instance of the rendezvous problem is the underlying graph, together with two distinct nodes that are initial positions of the agents. Such an instance $(G,u,v)$ is {\em feasible} if there exists an algorithm, possibly valid only for this instance, that guarantees rendezvous for it. A rendezvous algorithm is {\em universal} for a class of instances, if it is valid for all feasible instances from this class.

In each round, an agent can either stay at the current node or move to an adjacent node by a chosen port. When an agent arrives at a node, it sees its degree and the port number by which it enters the node. Agents are identical (they have no labels) and execute the same deterministic algorithm. They have an unbounded memory; computationally they are modelled as Turing machines. A {\em rendezvous} between agents occurs when they are at the same node in the same round. The
agents can cross each other in an edge going in opposite directions but they do not notice it. This is a standard assumption for synchronous rendezvous in networks (cf. \cite{CKP,DFKP}).

In this paper, we consider the model with footprints (cf. \cite{GaP}), whenever an agent visits an unmarked node, it leaves a permanent mark, called {\em footprint}, on it, and all footprints are identical.

Let $G$ be any graph and $v$ a node in this graph. A {\em path} of length $k$ starting at $v$ is a sequence of integers $(p_1,q_1,p_2,q_2, \dots ,p_k,q_k)$, such that there exist nodes $v=v_0$, $v_1$, ..., $v_{k}$ in the graph $G$, for which $v_i$ is adjacent to $v_{i+1}$, where $i=0,1,\dots k-1$, and  edge $\{v_i,v_{i+1}\}$ has ports $p_{i+1}$ at its endpoint $v_i$ and 
$q_{i+1}$ at its endpoint $v_{i+1}$. Node $v_i$ is called the $i^{th}$ node of $\pi$.
For a path $\pi=(p_1,q_1,p_2,q_2, \dots ,p_k,q_k)$ starting at $v_0$, the {\em reverse path} $rev(\pi)$ is the path
$(q_k,p_k,q_{k-1},p_{k-1},\dots, q_1,p_1)$, starting at $v_k$. 

The {\em view} from $v$ in $G$, denoted $\cV(v,G)$, is the tree of all paths in $G$, starting at node $v$, where the rooted tree structure is defined by the prefix relation of sequences of integers. This definition is equivalent  to that from \cite{YK3}, where the notion of view was introduced. Nodes $u$ and $v$ in $G$ are called {\em symmetric}, if $\cV(u,G)=\cV(v,G)$.

In the model with footprints, Das and Pelc \cite{DP2026} presented universal rendezvous algorithms for four different classes of instances: those with symmetric bases, those with non-symmetric bases, those on trees, and those with a known upper bound on node degrees. In all the cases, the proposed universal algorithms work for graphs that are finite or countably infinite. However, they remain limited to classes of instances with a prior knowledge. Our aim is to design universal rendezvous algorithm for the class of all instances in the model with footprints without any a prior knowledge.

\subsection{Our contribution}
The main contribution of this work is to provide an affirmative answer to the following open question from \cite{DP2026}.\\

{\it Does there exist a universal rendezvous algorithm for the class of all instances in the model \\ \hspace*{8cm}with footprints?}\\

\noindent  Das and Pelc \cite{DP2026} proposed universal rendezvous algorithms in the model with footprints for two complementary classes: the class of instances with symmetric bases and the class of instances with non-symmetric bases. We design a universal rendezvous algorithm for the class of all instances in the model with footprints, where the underlying graph is an arbitrary connected graph, finite or countably infinite, without any {\em a priori} knowledge. This proves that there exists a universal algorithm for the class of all instances in the model with footprints.

\subsection{Related work}

The rendezvous problem has been investigated both in the deterministic scenario \cite{Pe} and in the randomized scenario \cite{alpern02b,alpern02a}. In many papers, the geometric scenario was considered in the context of the rendezvous task, e.g., in \cite{anderson98a}.
The extension of the rendezvous problem to several agents is usually called gathering, and was studied, e.g., 
in \cite{lim96}. 
The gathering problem for oblivious robots in the plane was studied in
\cite{FPSW},  and fault-tolerant aspects of this problem were investigated, e.g., in \cite{AP06,CP08}.

Rendezvous was studied both assuming that agents are allowed to mark visited nodes, e.g., using tokens
\cite{KKSS}, and assuming that they always mark visited nodes by leaving footprints \cite{GaP}, similarly as in the present paper. However, in most of the literature on rendezvous, the study considers a model forbidding any marking of nodes. The symmetry between the agents is often broken by assigning different labels to each of them \cite{DFKP}. It is then assumed that each agent knows its own label but not the other agent's label. In \cite{BP}, the authors designed the first rendezvous algorithm that works on arbitrary (even countably infinite) graphs for labeled agents that navigate synchronously, without leaving marks.

Deterministic rendezvous of anonymous agents in finite anonymous graphs was previously studied in \cite{CKP,DP1}. In the paper \cite{CKP}, the authors proposed a rendezvous algorithm that works for all non-symmetric initial positions in arbitrary finite graphs with a known bound on their size, using memory logarithmic in this bound. In the work \cite{DP1}, the authors studied the gathering problem of anonymous agents and characterized initial positions that allow gathering. The authors of \cite{GaP,DP2026} considered rendezvous in the model with footprints. A large volume of works investigated asynchronous rendezvous in the plane \cite{CFPS,FPSW} and in networks 
\cite{BCGIL,CLP,DGKKP,DPV}.

\section{Universal rendezvous with footprints}

In this section, we design a universal rendezvous algorithm for the class of all instances in the model with footprints and prove its correctness. 

\subsection{ Preliminaries}
In the model with footprints, whenever an agent visits an unmarked node, it leaves a permanent footprint, and all footprints are identical. 
We will use the following notion from \cite{DP2026}. Consider an instance $(G,u,v)$ and a path $\pi$ in $G$. A {\em parallel run} $[G,u,v,\pi]$ is the following procedure.
One agent starts at $u$, and the other agent starts at $v$.
Let $(u_0=u,u_1,\dots, u_k)$ and $(v_0=v,v_1,\dots, v_k)$ be the sequences of nodes corresponding to path $\pi$ starting at nodes $u$ and $v$, respectively.
Each agent follows path $\pi$ in rounds $1,2,\dots, k$. The {\em footprint status} for a parallel run $[G,u,v,\pi]$ of a node $u_i$ (respectively $v_i$) is 1, if this node is marked at its first visit by the agent starting at $u$ (respectively starting at $v$) during the forward exploration of the path, and it is 0 otherwise.

The following proposition from \cite{DP2026} gives a necessary condition on the feasibility of an instance in the model with footprints. 

\begin{proposition}\label{condition}
If the instance $(G,u,v)$ is feasible then
\begin{itemize}
\item
 nodes $u$ and $v$ are non-symmetric, or
 \item
 nodes $u$ and $v$ are symmetric and there exists a path $\pi$ of length $k$ in $G$,
 such that, for the parallel run $[G,u,v,\pi]$, there exists an  $1\leq i\leq k$, for which the footprint status of nodes $u_i$ and $v_i$ is different.
 \end{itemize}
\end{proposition}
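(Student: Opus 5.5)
We prove the contrapositive. Assume that $u$ and $v$ are symmetric, and that for every path $\pi$ of any length $k$ the footprint statuses of $u_i$ and $v_i$ agree for all $1\le i\le k$ in the parallel run $[G,u,v,\pi]$. We show that no deterministic algorithm $\mathcal{A}$ achieves rendezvous for $(G,u,v)$. The idea is the classical indistinguishability argument. Run $\mathcal{A}$ from both $u$ and $v$ simultaneously. By induction on the round number $t$, we show that the two agents have identical histories. A history consists of the degrees seen, the entry ports, and whether each visited node was already marked on arrival. Since the agents are identical deterministic machines, identical histories imply identical decisions, so they follow the same port sequence. They never meet, because, as shown below, they always occupy distinct nodes.

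The induction invariant at round $t$ has three parts:
\begin{enumerate}
\item Both agents have the same history.
\item Consequently, both have followed the same path $\pi_t$, where staying put is omitted from the path.
\item The current nodes $x_t$ (for the agent from $u$) and $y_t$ (for the agent from $v$) are the endpoints of $\pi_t$ from $u$ and $v$ respectively, and they are distinct.
\end{enumerate}
Symmetry of $u$ and $v$, that is $\cV(u,G)=\cV(v,G)$, implies that any path from $u$ is also a path from $v$, with the same degrees and entry ports along the way. So the degree and port parts of the histories coincide automatically. The only thing that could differ is the footprint observation.

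The key step is to relate the footprint observation during actual execution to the footprint status in parallel runs. When an agent arrives at a node $x_t$, it sees a mark either because it marked the node itself earlier, or because the other agent marked it. I would first try to argue that the second case never happens while the histories agree. Suppose instead that the first such event is the agent from $u$ arriving at a node $y_s$ previously visited by the other agent. I then compare the situation at $y_s$ with $x_s$ in the parallel run along $\pi_t$. If I cannot rule the cross-marking event out directly, the fallback is to show that any such event creates an asymmetry between the two parallel executions. That asymmetry would then be witnessed by a path $\pi'$, namely the concatenation of $\pi_t$ with the path back to the shared node, whose parallel run has a footprint status mismatch. This would contradict the hypothesis. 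If there is no cross-marking, each agent's observation at $x_t$ is exactly ``was $x_t$ already visited along my own path,'' which is the footprint status of the corresponding node in $[G,u,v,\pi_t]$. By hypothesis these statuses agree, so the histories stay identical.

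The main obstacle is handling the interaction of the two agents' marks. The other agent's footprint is exactly what the parallel-run notion is designed to abstract away, so I would need to check carefully how the status in Proposition~\ref{condition} is defined in \cite{DP2026}. If the footprint status is meant to include marks left by both agents in the parallel run, the induction is immediate: the real execution coincides with the parallel run along the followed path, and agreement of statuses gives agreement of observations. Distinctness of $x_t$ and $y_t$ is a separate issue, since a meeting is exactly what we must exclude. Suppose $x_t=y_t$ at the first such round. Then, at the previous arrival, one agent would see the other's mark where its twin saw its own, or saw none. In the parallel run along $\pi_t$, this gives a position where exactly one of $u_i,v_i$ has status 1, contradicting the hypothesis. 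Pinning down this step is where I expect the real work to lie.
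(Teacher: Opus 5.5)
Your architecture (contrapositive, induction on rounds, identical histories forcing identical port choices) is the standard indistinguishability argument the proposition rests on; the paper imports it from Das and Pelc without proof.

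\textbf{The gap: non-meeting.} The step that yields infeasibility, namely that the agents never occupy the same node, is not established. Your claim that a first coincidence $x_t=y_t$ forces one agent to see, at ``the previous arrival'', a mark its twin does not see has no justification, and in general no footprint asymmetry need arise. The missing idea is purely about ports and uses no footprints. Let $t$ be the first round with $x_t=y_t=w$. Since histories agree, there are two cases.
\begin{itemize}
\item Both agents stayed put in round $t$, so $x_{t-1}=y_{t-1}$.
\item Both entered $w$ in round $t$ through the same entry port $q$. Port $q$ at $w$ is a single edge, so both came from its other endpoint, and again $x_{t-1}=y_{t-1}$.
\end{itemize}
Either case contradicts minimality, because $x_0=u\neq v=y_0$. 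Equivalently, $rev(\pi_t)$ followed from a common endpoint would lead to both $u$ and $v$. The same argument shows that simultaneous arrival at one node never occurs in any parallel run, so the statuses are well defined.

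\textbf{The footprint step.} Both your primary route (no agent ever steps on the other's mark while histories agree) and your fallback (any such event yields a status mismatch) are false. Take an oriented ring of $2m$ nodes, with port $0$ clockwise and port $1$ counterclockwise at every node, and antipodal bases. Moving clockwise $m$ times brings each agent onto the other's marked base. Yet statuses never differ, since rotation by $m$ is a port-preserving automorphism swapping the bases, and this instance is infeasible.

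The route you mention last is the right one, and it is unconditional. The definition given just before the proposition makes the parallel run a simultaneous execution by both agents, so statuses already include both agents' marks. Under identical histories both agents make their $i$-th move in the same round. Hence the marks present at the $i$-th arrival of the agent from $u$ are exactly those at $u_0,\dots,u_{i-1},v_0,\dots,v_{i-1}$, as in $[G,u,v,\pi_t]$ with waiting rounds deleted. Agreement of statuses then gives agreement of observations, and with the port argument above the induction closes.
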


Intuitively, the above condition says that either the views from the agents' bases are different, or they are equal, but the ``input on the ground'' (i.e., the footprint status) perceived by the agents along some path must differ.

Next, we recall the transformation
of binary strings defined in \cite{BP}. Consider a binary string $L=(l_1 l_2\ldots l_s)$ with $l_1=1$. Following \cite{BP,DP2026}, we define the following transformations of $L$.

First, we construct the binary string $T_1(L)$ from $L$ as follows. Each bit 1 is replaced by 10, and each bit 0 is replaced by 01. Additionally, bits
11 are added at the end. The obtained string $T_1(L)$ has length $2s + 2$ and has the following property.\\
{\bf Property 1.} If  we
start with two different strings $L_1\neq L_2$ then none of the transformed strings $T_1(L_1)$ and $T_1(L_2)$ can be a prefix of the other (cf. \cite{DP2026}).

Next, we construct the string $T_2(L)$ of length $4s+4$ from $T_1(L)$ by replacing each bit 1 by 10 and each bit 0 by 01. Notice that since two strings $L_1$ and $L_2$
may have different lengths, the same is true for the transformed
strings $T_2(L_1)$ and $T_2(L_2)$. However, they have the following property.\\
{\bf Property 2.}
 If $L_1\neq L_2$, then there exists an
index $j$, such that the $j^{th}$ bit of $T_2(L_1)$ is 1 and the $j^{th}$ bit of $T_2(L_2)$ is 0 (cf. \cite{DP2026}).  (Hence, not only $T_2(L_1)$ and $T_2(L_2)$ differ in some bit, but it is guaranteed that in some
position the bits are 1 and 0, and in some other position the bits are 0 and 1). 

Finally, we construct $T_3(L)$ from $T_2(L)$ by adding the prefix 011 to it. The transformed string $T_3(L)$ of length $4s+7$ preserves the property of being prefix-free and guarantees that if $L_1\neq L_2$ then in some
position the bits are 1 and 0 and in some other position the bits are 0 and 1. The transformed string $T_3(L)$ has the following additional property.\\
{\bf Property 3.}
The segment corresponding to bits with indices 2,3,4 in $T_3(L)$ is the only segment of three consecutive bits 1 in $T_3(L)$ (cf. \cite{DP2026}). 

For example, consider $L=10$. We have $T_1(L)=100111$, $T_2(L)=100101101010$, and $T_3(L)=011100101101010$.

\subsection{High-level idea of the algorithm}
The high-level idea of $AlgorithmFootprint$, which guarantees rendezvous in any connected graph in the model with footprints, is as follows. From Proposition~\ref{condition}, it follows that a feasible instance can have either non-symmetric bases or symmetric bases. If the views from the bases are different, there exists some path that differs when starting from each of the bases. This path, converted to a binary string and transformed, is used to break the symmetry in the agents' actions, enabling rendezvous. On the other hand, if the views from the bases are the same, then the footprint status perceived by the agents following some path must be different. In other words, by following some path, one agent visits an unmarked node, while the other agent visits the corresponding node by following the same path that is already marked. Based on the footprint status, the agents break symmetry between their actions to meet. 

Since the agents don't know when such a difference will be found, the algorithm proceeds by increasing levels. At each level $k$, an agent explores all paths of length $k$, and constructs the {\it temporary label} $B_k$ (based on port numbers), and the {\it footprint label} $F_k$ (based on footprint status). At each level $k\geq 1$, the agents first process the bits of the {\it temporary label} $B_k$. If there is no rendezvous at the end of processing all bits of $B_k$, then the agent processes the bits of the {\it footprint label} $F_k$. In other words, at each level, the agents check the possibility of non-symmetric bases at first, and then check the possibility of symmetric bases. If there is still no rendezvous, the agent then proceeds to the next level. 

The agents process a bit by being active when the bit is 1 and passive when the bit is 0. An active agent traverses some path and returns to the base, while a passive agent waits at the base. At some point, the path that exhibits the difference (based on port numbers or footprint status) is found and is sufficiently long to reach the base of the passive agent. This causes the agents to process different bits, 1 and 0, synchronously, where the path traversed by one agent forces it to meet the agent that processes bit 0, who is waiting at its base.

\subsection{Detailed description of the algorithm}
For any $k\geq 1$, we define the {\em temporary label} $B_k$  of an agent $A$ at level $k$ as follows. Consider all $k$-length paths starting at the base of $A$. Each such path is a sequence of length $2k$
of non-negative integers that are port numbers. Assume that $\pi_1^k,\pi_2^k,\ldots,\pi_t^k$ is the list of all paths of length $k$ starting at the base of $A$, in lexicographically increasing order. Let $\pi_1^k=(p_1,q_1,p_2,q_2, \dots ,p_k,q_k)$.  Let $b_1^k$ be the concatenation of binary representations of integers $p_1,q_1,p_2,q_2, \dots ,p_k,q_k$. Similarly, we define $b_i^k$, for all $i=1,2,\dots,t$.
For $1\leq i\leq t$, let $c_i^k=T_3(b_i^k)$. Finally, define the temporary label
 \[
B_k = \underbrace{c_1^kc_1^k\ldots c_1^k}_{\text{$t$ times}} \underbrace{c_2^k c_2^k\ldots c_2^k}_{\text{$t$ times}}\ldots \underbrace{c_t^k c_t^k\ldots c_t^k}_{\text{$t$ times}}\]
Next, for any $k\geq 1$, we define the {\em footprint label} $F_k$ of an agent $A$ at level $k$. For any $k\geq 1$ and $1\leq i\leq t$, let $v_i^k$ denote the $k^{th}$ node of $\pi_i^k$. Define $V_k=s_1s_2\ldots s_t$ where for $1\leq i \leq t$,\[ s_i=\begin{cases} 
1 & \textrm{ if} \;v_i^k \textrm { is a marked node at the first visit following path}\;\pi_i^k \;\textrm{by the agent}  \\
0 &  \textrm{ if} \; v_i^k \textrm { is an unmarked node at the first visit following path}\;\pi_i^k\;\textrm{by the agent} \\
\end{cases}
\]

\noindent By the first visit, we mean the visit during the forward exploration
of the path $\pi_i^k$ by the agent. Notice that $V_k$ is a binary string whose first bit can be 0. For each $k\geq 1$, let $W_k=V_1V_2\ldots V_k$ and $f_k=T_2(W_k)$. Finally, define the footprint label \[F_k=
 \underbrace{f_kf_k\ldots f_k}_{\text{$t$ times}}\]
The {\em tape} of an agent $A$ is defined as the infinite binary string $tape(A)$ as the concatenation of temporary labels $B_k$, and footprint labels $F_k$ for all $k\geq 1$ in the following way $tape(A)=(B_1 F_1 B_2 F_2\ldots B_k F_k\ldots )$. The position of a bit in the tape is called its {\em index}. For each $k\geq 1$, the $k^{th}$ block of $tape(A)$ consists of $B_k$ and $F_k$.

Let $n_k$ denote the number of $k$-length paths starting at the base of agent $A$. For each $k\geq 1$, $B_k$ is divided into $n_k$ segments. For $1\leq j\leq n_k$, the $j^{th}$ segment of $B_k$ consists of $n_k$ copies of $c_j^k$. For $1\leq p \leq n_k $, the $p^{th}$ copy of $c_j^k$ in the $j^{th}$ segment of $B_k$ will be related to the traversal of the path $\pi_p^k$ by agent $A$, i.e., the $p^{th}$ path of length $k$ in lexicographic order. On the other hand, for each $k\geq 1$, $F_k$ consists of only one segment that has $n_k$ copies of $f_k$. For $1\leq p \leq n_k $, the $p^{th}$ copy of $f_k$ in $F_k$ will be related to the traversal of the path $\pi_p^k$ by agent $A$, i.e., the $p^{th}$ path of length $k$ in lexicographic order.

The algorithm is executed by an agent $A$ with
tape $tape(A)$, starting at its base $b_A$. We assign rapidly increasing time periods to process consecutive bits $l_i$ of $tape(A)$. During the execution of the algorithm, agent $A$ constructs $tape(A)$ step by step. By the string transformation, for each $k$, the first two bits of a temporary label $B_k$ are 01. By processing the first bit 1, agent $A$ traverses the lexicographically smallest path of length $k$ and constructs $c_1^k$. As the algorithm proceeds, agent $A$ explores the $p^{th}$ lexicographically ordered path of length $k$ by processing the bits of the $p^{th}$ copy of $c_1^k$. Once agent $A$ explores all paths of length $k$, it can construct the temporary label $B_k$ and footprint label $F_k$ of $tape(A)$.

The period reserved for processing bit $l_i$ lasts $3i^3$ rounds. The processing bit $l_i$ can be in the the $p^{th}$ copy of $c_j^k$ in the $j^{th}$ segment of $B_k$, or in the 
the $p^{th}$ copy of $f_k$ in $F_k$, where $k\geq 1,\; 1\leq j\leq n_k,$ and $1\leq p \leq n_k$. Then the processing of bit $l_i$ concerns the path $\pi_p^k$ starting
at $b_A$, in the following way:

\begin{itemize}
\item $l_i=1$. For each $1\leq q\leq k$, the agent goes to the $q^{th}$ node of the path $\pi_p^k$, waits for $3i^2-2q$ rounds, and returns to the base. Next, the agent waits for $3i^3- k\cdot 3i^2$ rounds at the base.
\item $l_i=0$. The agent waits for $3i^3$ rounds at the base. 
\end{itemize}
Note that the agent $A$ starts and ends processing each bit of $tape(A)$ at its base $b_A$. The algorithm is interrupted as soon as the agents meet. The pseudocode of the algorithm, executed by agent $A$, is presented on the next page.

\begin{algorithm}[h]
\footnotesize
Initialize $i=1$, and $B_0=\varepsilon$; \tcp{$\varepsilon$ is the empty string}
\For{$k=1,2\ldots$}
{$tape(A)=B_1 F_1 B_2 F_2\ldots B_{k-1}F_{k-1}01$ and $x=|tape(A)|$\;
\While{$i\leq x$}
{Let $l_i$ be a bit in the $p^{th}$ copy of $c_j^k$ in the $j^{th}$ segment of $B_k$ or $l_i$ be a bit in the $p^{th}$ copy of $f_k$ in $F_k$\;
\uIf{$l_i=1$ } 
    {\For{$q=1$ to $k$} 
    {Go to the $q^{th}$ node of the path $\pi_p^k$, wait for $3i^2-2q$ rounds, and return to the base}
    Wait for $3i^3- k \cdot 3i^2$ rounds at the base\;}
 \Else
    { 
     Wait for $3i^3$ rounds at the base\;
    }
 \uIf{$j=1$, $p=1$, and $l_i$ is the first bit 1 in $B_k$ and $\pi_1^k$ is not the lexicographically largest path of length $k$}{$tape(A)=B_1 F_1B_2 F_2\ldots B_{k-1}F_{k-1}c_1^kc_1^k$ and $x=|tape(A)|$\; } 
 \uIf{$j=1$, $p=1$, and $l_i$ is the first bit 1 in $B_k$ and $\pi_1^k$ is the lexicographically largest path of length $k$}{$tape(A)=B_1F_1B_2F_2\ldots B_{k-1}F_{k-1}B_k F_k$ and $x=|tape(A)|$\; } 
 \uIf{$j=1$, $p>1$, and $\pi_p^k$ is not the lexicographically largest path of length $k$ and $l_i$ is the first bit 1 in the $p^{th}$ copy of $c_1^k$} {$tape(A)=tape(A).c_1^k$ and $x=|tape(A)|$\;}
    \If{$j=1$, $p>1$, and $\pi_p^k$ is the lexicographically largest path of length $k$ and $l_i$ is the first bit 1 in the $p^{th}$ copy of $c_1^k$} {$tape(A)=B_1F_1B_2F_2\ldots B_{k-1}F_{k-1}B_k F_k$ and $x=|tape(A)|$\;}$i=i+1$\; }}
         
\caption{$AlgorithmFootprint$}
\label{algo1}
\end{algorithm}

\subsection{Correctness}
In this section, we prove the correctness of $AlgorithmFootprint$. From Proposition~\ref{condition}, it follows that a feasible instance can have either non-symmetric bases or symmetric bases. First, consider the case when the agents start from non-symmetric bases $b_A$ and $b_B$. They must have different paths from some length on. Let $k$ be the smallest integer for which there are different paths of length $k$ starting from $b_A$ and $b_B$. 
For
an instance of the rendezvous problem, where agents $A$ and $B$ start at bases $b_A$ and $b_B$, respectively, where $D$ is the distance between their bases, we define the {\it critical path} for $A$ as follows. First, consider the case when $k\leq D$. Let $\pi$ denote the lexicographically smallest among shortest paths from $b_A$ to $b_B$. In this case, we call $\pi$ the critical path of agent $A$. The length of this path is $D$.
Next, consider the case when $k>D$. Let $\pi$ denote the lexicographically smallest $k$-length path starting at $b_A$, and passing through $b_B$. In this case, we call $\pi$ the critical path of agent $A$.
The length of this path is $k$. 

Next, consider the case when the agents start from symmetric bases $b_A$ and $b_B$. Since $\cV(b_A,G)=\cV(b_B,G)$, by Proposition~\ref{condition} there exists a level $y\geq 1$ such that the footprint label $F_y$ is different for agent $A$ and for agent $B$. By the definition of footprint label, for all $k\geq y$, $F_k$ is different for agent $A$ and for agent $B$. Let $\pi$ denote the lexicographically smallest $k$-length path starting at $b_A$, and passing through $b_B$ such that $k\geq y$. We call $\pi$ the critical path of agent $A$. 

Let us assume that the critical path $\pi$ of length $\alpha$ of agent $A$ is the $j^{th}$ lexicographically smallest $\alpha$-length path starting at $b_A$, i.e., $\pi_j^{\alpha}=\pi$. The segment of $tape(A)$ corresponding to the $j^{th}$ copy of $c_j^{\alpha}$ in the $j^{th}$ segment of $B_k$ or the $j^{th}$ copy of $f_k$ in $F_k$ is called the {\it critical segment} of agent $A$. The agents start simultaneously from their bases. However, the agents may start executing their respective critical segments in different rounds based on their views. 

\begin{remark}\normalfont
To show that the formulation of the algorithm is correct, we need to prove that if $l_i$ is the $i^{th}$ bit of $tape(A)$ located in $B_k$ or $F_k$, then $\forall 1\leq j\leq k,\; 3i^2\geq 2j$. Indeed, we have $3i^2\geq 2i\geq 2k\geq  2j$, for all $1\leq j\leq k$.
\end{remark}

The following lemma proves the correctness of $AlgorithmFootprint$ when the agents start from non-symmetric bases $b_A$ and $b_B$. 

\begin{lemma}\label{th:nonsym} 
Suppose that the agents start executing $AlgorithmFootprint$ simultaneously, and their bases are non-symmetric. Then the agents meet by the end of the execution of the critical segment by the agent that starts executing its critical segment earlier.
\end{lemma}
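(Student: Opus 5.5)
The plan follows the Bartha–Pelc argument for labeled agents, adapted to the tape structure used here. Let $k$ be the smallest length at which the sets of $k$-length paths from $b_A$ and $b_B$ differ. Then for every $k'<k$ the blocks $B_{k'}F_{k'}$ of $tape(A)$ and $tape(B)$ have the same length. For $B_{k'}$ this is immediate, since the path lists coincide. For $F_{k'}$ we only need equal length, which holds because $n_{k'}$ is the same for both agents. Hence both agents start block $k$ at the same index $i_0$ and in the same round. Since the round at which bit $l_i$ begins depends only on $i$, and the processing of bit $l_i$ takes exactly $3i^3$ rounds regardless of its value, any two bits with the same index are processed by the two agents in exactly the same time period. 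It therefore suffices to exhibit one index $i$ where the agents hold different bits and where the active agent's traversal is guaranteed to pass through the other agent's base while that agent waits there.

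\textbf{Step 1: a bit difference inside the critical segment.} Let $\pi=\pi_j^\alpha$ be the critical path of $A$; we may take $A$ to be the agent that starts its critical segment no later than $B$. We compare the $j$th copy of $c_j^\alpha$ in $tape(A)$ with whatever $B$ is processing in the same rounds. In block $k$ the lists of paths differ, so $b_i^k$ for $A$ and $B$ eventually differ as strings. Property~1 makes the codes $c$ prefix-free, so misalignment cannot hide the difference. Property~2 (which survives in $T_3$) gives positions where $A$ has 1 and $B$ has 0, and positions with the opposite pattern. Property~3 (the unique $111$ at positions 2--4) lets us argue that if the segment structures of the two tapes become misaligned, then $A$'s critical copy is compared against a shifted portion of $B$'s tape. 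The unique $111$ marker then forces a position where $A$ reads 1 and $B$ reads 0.

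The $t$-fold repetition is what makes the critical copy the right place to look. Each segment consists of $n_\alpha$ copies, so the $j$th copy of $c_j^\alpha$ lies in a region where $B$ is processing copies of some single code $c^\alpha_{j'}$, or a shifted concatenation of such copies. When $k\le D$ we have $\alpha=D\ge k$, so the lists already differ at level $\alpha$ and the argument goes through. When $k>D$ we have $\alpha=k$, which is the first level of difference. Before applying the properties I would check that the first $j-1$ codes, or the equal-length prefixes, do not cause the difference to be missed.

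\textbf{Step 2: the meeting itself.} Fix an index $i$ in the critical segment with $A$'s bit equal to 1 and $B$'s bit equal to 0. During the period of $l_i$, agent $B$ stays at $b_B$ for all $3i^3$ rounds. Agent $A$, processing bit 1 along $\pi=\pi_j^\alpha$, visits in turn the $q$th node of $\pi$ for every $q\le\alpha$. By the definition of the critical path, one of these nodes is $b_B$: the last node when $k\le D$, and some intermediate node when $k>D$. So $A$ reaches $b_B$ during the period, while $B$ is waiting there, and they meet. The Remark guarantees that the waiting times are nonnegative, so the schedule is well defined. Finally, the lemma asserts a meeting by the end of the critical segment of the earlier agent, and the index $i$ lies inside that segment, so the conclusion holds.

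\textbf{Main obstacle.} The hard part is Step 1: showing that, despite the tapes possibly having different lengths and alignments, some index inside $A$'s critical copy of $c_j^\alpha$ carries 1 for $A$ and 0 for $B$. I would first handle the aligned case, where $B$ is also processing the $j$th copy of some $c^\alpha_{j'}$ with $c^\alpha_{j'}\neq c^\alpha_j$, using Property~2. I would then handle misalignment using Property~3: any shift moves $B$'s unique $111$ off $A$'s $111$ at positions 2--4 of $c^\alpha_j$, creating a 1-versus-0 position.
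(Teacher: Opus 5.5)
Your plan has the same skeleton as the paper's proof. You look for an index inside $A$'s critical copy of $c_j^\alpha$ where $A$ holds 1 and $B$ holds 0, so that $B$ waits at $b_B$ while $A$'s excursions along $\pi$ reach $b_B$. The paper organizes Step 1 by anchoring at the indices $m,m+1,m+2$, which carry the $111$ of $A$'s critical copy. If $B$'s bits there are not all 1, the index is found. Otherwise Property 3, together with the absence of $111$ in the $f$-strings and at junctions, forces $B$'s current $c$-copy to start at index $m-1$, just like $A$'s. This disposes at once of every shifted alignment and of $B$ being inside some $F_{k'}$. Your dichotomy ($B$ in ``the $j$th copy of some $c^\alpha_{j'}$'' versus ``a shift'') omits coinciding boundaries at another level, segment or copy number, and omits $F$, but that is repairable. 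The genuine gap is the remaining case of coinciding boundaries. There you simply assume $c^\alpha_{j'}\neq c^\alpha_j$, and nothing guarantees it. The check you postpone (``that the first $j-1$ codes \dots\ do not cause the difference to be missed'') is exactly what fails.

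\emph{Counterexample.} Take the path $a - b_A - b_B - b - c$ with the following ports:
port $0$ from $b_A$ to $b_B$ and port $1$ from $b_A$ to $a$;
port $0$ from $b_B$ to $b_A$ and port $1$ from $b_B$ to $b$;
port $0$ from $b$ to $b_B$ and port $1$ from $b$ to $c$;
port $0$ at the leaves.
The level-1 path lists coincide, while at level 2 there are 3 paths from $b_A$ and 4 from $b_B$. So $k=2>D=1$, and each agent's critical path is $(0,0,0,0)=\pi_1^2$. At level 1 the agents' moves mirror each other, including footprint statuses, and never share a node. Hence $B_1F_1$ is bit-identical for both, and both critical segments are the first copy of the same code at the start of $B_2$, at the same index. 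This does not depend on how ports are encoded, since the two port sequences are identical. During that copy both agents process identical bits and traverse $(0,0,0,0)$ from their own bases. They are therefore always at opposite ends of the edge $\{b_A,b_B\}$ or crossing it. There is no 1-versus-0 index, and no meeting, by the end of the critical segment. A usable difference only appears later, when $A$ enters its second segment while $B$ is still in its fourth copy of $c_1^2$.

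So Step 1 cannot be proved as formulated. You would need either a different choice of where the difference is sought, or a weaker conclusion than meeting within the critical segment. The paper's subcase 2 invokes Properties 1 and 2 at exactly this point without showing that the aligned codes differ, so it has the same hole and following it would not close yours.
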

\begin{proof}
Let $A$ be the agent that starts the execution of its critical segment earlier, and let $B$ be the other agent. If agents start executing their critical segments simultaneously, then the choice of $A$ is arbitrary.
Let $r_1$ be the round in which $A$ starts executing its critical segment. Let  $m-1$ be the index of the first bit in this segment. Hence, agent $A$ starts the execution of the second
bit of its critical segment  (which is bit 1) in round $r_2 = r_1 + 3\cdot (m-1)^3+1$. Since the agents start simultaneously, and the time required for processing the $i^{th}$ bit of their respective tapes is equal to $3i^3$ rounds for both the agents, agent $B$ also starts processing the $m^{th}$ bit of $tape(B)$ in round $r_2$. 

Consider the time interval $[r_2, r_2 + (D-1)\cdot 3m^2+D]$. During this
time interval, agent $A$ traverses its critical path $\pi$ (of length $\alpha$), and reaches the $D^{th}$ node on $\pi$ by processing the $m^{th}$ bit of $tape(A)$. If the $m^{th}$ bit of $tape(B)$ is 0, then agent $A$ meets agent $B$ at the base $b_B$ during the execution of the $m^{th}$ bit of  $tape(B)$ by $B$, because agent $B$ will be waiting at $b_B$, while agent $A$ reaches $b_B$.
Otherwise, the $m^{th}$ bit of $tape(B)$ is 1, and we have the following three subcases.
\begin{enumerate}
\item the $m^{th}$ bit of agent $B$ is not the first bit 1 of some $p^{th}$ copy of $c_j^k$ in the $j^{th}$ segment of $B_k$. Then by the definition of $c_j^k$, at least one among the $(m+ 1)^{th}$ and $(m+2)^{th}$ bits of $tape(B)$ is 0, because of Property 3 of string transformation. 
Since the $m^{th}$, the $(m+ 1)^{th}$ and the $(m+2)^{th}$ bits of $tape(A)$ are 1, the agents meet at the base $b_B$ of $B$ during the execution of the first bit 0 of $tape(B)$ following the $m^{th}$ bit of this tape. During the execution of this bit, agent $B$ waits at its base $b_B$ and agent $A$ reaches $b_B$.

\item the $m^{th}$ bit of agent $B$ is the first bit 1 of some $p^{th}$ copy of $c_j^k$ in the $j^{th}$ segment of $B_k$. 
In view of Properties 1 and 2 of string transformation, for some index $s$, the $s^{th}$ bit of $tape(A)$ is 1, the $s^{th}$ bit of $tape(B)$ is 0. Hence, the agents meet at the base $b_B$ of $B$, during the executions of the $s^{th}$ bits of their tapes.
Indeed, during the execution of this bit, agent $B$ waits at its base $b_B$ and agent $A$ reaches $b_B$.

\item the $m^{th}$ bit of agent $B$ is in some $p^{th}$ copy of $f_k$ in $F_k$. By the definition of $f_k$, at least one among the $(m+ 1)^{th}$ and $(m+2)^{th}$ bits in $F_k$ of $tape(B)$ is 0, because there are no three consecutive bits 1 in $f_k$. During the execution of this bit, agent $B$ waits at its base $b_B$ and agent $A$ reaches $b_B$.
\end{enumerate}

\vspace*{-0.75cm}
\end{proof}

The following lemma proves the correctness of $AlgorithmFootprint$ when the agents start from symmetric bases $b_A$ and $b_B$. 

\begin{lemma}\label{th:sym} 
Suppose that the agents start executing $AlgorithmFootprint$ simultaneously, and their bases are symmetric. Then the agents meet by the end of the execution of the critical segment by the agent that starts executing its critical segment earlier.
\end{lemma}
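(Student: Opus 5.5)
Since the bases are symmetric, $\cV(b_A,G)=\cV(b_B,G)$. I would first record the consequence that the two agents have identical lists of paths $\pi_1^k,\dots,\pi_{n_k}^k$ for every $k$. Hence their temporary labels $B_k$ coincide for all $k$, and so do the lengths $n_k$ and the lengths of $W_k$ and $f_k$. The tapes of $A$ and $B$ therefore have exactly the same block structure: for every $k$, the block $B_kF_k$ starts at the same index in both tapes and has the same length. It follows that both agents reach their critical segments at the same index, since the critical path $\pi=\pi_j^{\alpha}$ of $A$ has the same position $j$ in the lexicographic order for $B$. The critical segment is the $j^{th}$ copy of $f_\alpha$ in $F_\alpha$, where $\alpha\geq y$. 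Consequently the agents start executing their critical segments in the same round $r_1$, and the choice of $A$ is arbitrary.

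Next I would use the choice $\alpha\geq y$. By Proposition~\ref{condition} and the definition of $y$, we have $W_\alpha^A\neq W_\alpha^B$, and these strings have equal length. Property 2 of $T_2$ then gives indices $s$ and $s'$ inside $f_\alpha$ with two properties. At index $s$, the bit of $f^A_\alpha$ is 1 and the bit of $f^B_\alpha$ is 0. At index $s'$, the roles are reversed. In particular there is a position in the critical segment where $A$ processes bit 1 while $B$ processes bit 0. Because of the identical block structure, these bits have the same global index $i$ in both tapes, so they are processed in the same period of $3i^3$ rounds starting in the same round.

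In that period $B$ waits at $b_B$ for all $3i^3$ rounds. Meanwhile $A$, processing a bit 1 of the $j^{th}$ copy of $f_\alpha$, performs the traversals of $\pi_j^\alpha=\pi$: for each $q\leq\alpha$ it goes to the $q^{th}$ node, waits, and returns. Since $\pi$ passes through $b_B$, say as its $q_0^{th}$ node, during the $q_0^{th}$ excursion $A$ is at $b_B$ while $B$ is there. The Remark guarantees that the waits $3i^2-2q$ are nonnegative and that the whole excursion fits in the period, so $A$ is back at $b_A$ by its end. This yields rendezvous no later than the end of the critical segment.

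The main obstacle will be the synchronization claim. The footprint labels themselves differ between the agents, and $V_k$ depends on the order of visits, which is affected by the other agent's marks. I must check that this difference does not shift indices, namely that $|f_k|=4|W_k|+4$ depends only on the $n_{k'}$, $k'\leq k$. I must also check that the tape-construction steps of the pseudocode happen at the same indices for both agents. These steps are triggered by the bits of copies of $c_1^k$, which are identical, so I expect this to go through. A secondary point is that $y$ can be taken to be a single level that works for both agents' labels, so that $F_\alpha^A\neq F_\alpha^B$ for the common $\alpha$. Here I would use that $W_k$ extends $W_y$.
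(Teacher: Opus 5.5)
Your proposal is correct and follows the paper's argument. Equal views give identically structured, index-synchronized tapes, and $W_\alpha^A\neq W_\alpha^B$ because $\alpha\geq y$. Property 2 then yields a bit of $A$'s critical segment that $A$ processes as 1 while $B$ processes 0, and during that bit $A$ walks along $\pi$ to $b_B$ while $B$ waits there.

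One claim you make is false but unused: the two agents' critical segments need not start at the same index. $B$'s critical path must pass through $b_A$, and the port sequence $\pi$ followed from $b_B$ generally does not (already in an oriented $3$-cycle). So drop that claim and let $A$ be the agent whose critical segment comes first; your argument in fact works for either agent's critical segment.
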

\begin{proof}
Due to the symmetry of the bases and the simultaneous start
of the agents, the processing of the bits by both agents proceeds simultaneously for all levels $k\geq 1$. Let $A$ be the agent that starts the execution of its critical segment earlier, and let $B$ be the other agent. If agents start executing their critical segments simultaneously, then the choice of $A$ is arbitrary. Let the critical segment of agent $A$ correspond to the $p^{th}$ copy of $f_k$ in $F_k$ for some $k\geq 1$. By the definition of footprint label, in view of Properties 1 and 2 of string transformation, and by Proposition~\ref{condition}, for some index $j$, the $j^{th}$ bit in the critical segment of $A$ is 1, and the $j^{th}$ bit in $F_k$ of $tape(B)$ is 0. Hence, the agents meet at the base $b_B$ of $B$, during the executions of the $j^{th}$ bits of their tapes.
Indeed, during the execution of this bit, agent $B$ waits at its base $b_B$ and agent $A$ reaches $b_B$.
\end{proof}

\begin{theorem}\label{univ}
$AlgorithmFootprint$ is a universal rendezvous algorithm for the class of all instances, where the underlying graph is any connected graph, finite or countably infinite, in the model with footprints.

\end{theorem}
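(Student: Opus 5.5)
The theorem follows by combining Proposition~\ref{condition} with Lemmas~\ref{th:nonsym} and~\ref{th:sym}. Let $(G,b_A,b_B)$ be any feasible instance, where $G$ is a connected graph, finite or countably infinite. By Proposition~\ref{condition}, either the bases are non-symmetric, or they are symmetric and some path $\pi$ yields different footprint statuses in the parallel run $[G,b_A,b_B,\pi]$. No third case exists, so it suffices to show that in each case the critical segments exist, are executed in finite time, and that the relevant lemma then gives rendezvous.

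\textbf{Well-definedness.} All degrees are finite, so for each $k$ the number $n_k$ of $k$-length paths from a base is finite. Hence each $B_k$ and $F_k$ is a finite string, and an agent completes level $k$ in finitely many rounds, namely $\sum_i 3i^3$ over the indices of the first $k$ blocks. This holds even when $G$ is infinite. I would check that the incremental construction of $tape(A)$ in the pseudocode is consistent with the stated definition. Exploring $\pi^k_p$ while processing a bit of the $p$-th copy of $c_1^k$ reveals the ports of $\pi^k_p$, and also the footprint status at its $k$-th node at the first forward visit. So once all $n_k$ paths are explored, $B_k$ and $F_k$ are known before they are needed. The Remark ensures that the waiting times $3i^2-2q$ are nonnegative.

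\textbf{Existence of critical paths.} In the non-symmetric case, views differ, so there is a smallest $k$ at which the path sets from the two bases differ. Since $G$ is connected, $D$ is finite. If $k\le D$, a shortest path to $b_B$ exists. If $k>D$, a $k$-length path through $b_B$ exists, obtained by walking to $b_B$ and then back and forth along an edge. Either way the critical path, and hence a finite-index critical segment, exists for each agent.

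In the symmetric case, Proposition~\ref{condition} gives a level $y$ at which $F_y$ differs. Since $W_k$ extends $W_{k-1}$, the labels $F_k$ differ for every $k\ge y$, and a path of length at least $y$ through $b_B$ exists by the same padding argument. So the agent $A$ whose critical segment begins first starts it in some finite round.

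\textbf{Conclusion and main obstacle.} Applying Lemma~\ref{th:nonsym} or Lemma~\ref{th:sym} yields a meeting by the end of that segment, hence in finite time, and this holds for every feasible instance. Since the algorithm uses no knowledge of the instance, it is universal.

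The main obstacle is the symmetric case, where the difference in footprint status must be turned into a position where one tape reads 1 and the other reads 0 while the agents are still aligned. Footprints depend on the joint history, so I would argue by induction on rounds that symmetric agents have identical port histories. Their tapes can then differ only through the $V_k$ bits. Since the index timing is common to both agents, Property 2 applied to $T_2(W_k)$ then yields the required index.
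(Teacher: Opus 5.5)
Your reduction has the same shape as the paper's proof: split via Proposition~\ref{condition}, then apply Lemma~\ref{th:nonsym} or Lemma~\ref{th:sym}. Your well-definedness and padding checks are fine. The genuine gap is the sentence ``in the symmetric case, Proposition~\ref{condition} gives a level $y$ at which $F_y$ differs.'' The paper makes the same unproved leap when it defines the symmetric critical path, so you have inherited it.

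Proposition~\ref{condition} only provides a path whose parallel run, started on a graph where only the two bases are marked, shows different footprint statuses. The bits of $V_k$ are read in the actual execution instead. By then both agents have marked the balls of radius $k-1$ around their bases and the endpoints of $\pi^k_1,\dots,\pi^k_{i-1}$, and these accumulated footprints can hide the difference. Your induction on identical port histories shows only that the tapes can diverge solely through the $V_k$. Nothing shows that they ever do.

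In fact the step fails. Take $K_4$ as the Cayley graph of $\mathbb{Z}_4$: at every node $x$, port $0$ leads to $x+2$, port $1$ to $x+1$, and port $2$ to $x-1$. Let the bases be $0$ and $1$. Translation by $1$ preserves ports, so the bases are symmetric. The instance is feasible: if both agents take port $1$, the agent from $0$ lands on the marked base $1$ and stays. The other lands on the unmarked node $2$ and returns through its entry port, so they meet at $1$.

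Under $AlgorithmFootprint$, however, $\pi^1_1=(0,0)$. So in bit $2$ of the tape the agents move simultaneously to nodes $2$ and $3$, both unmarked, and after this move every node is marked. Hence both agents get $V_1=011$, and $V_k$ is the all-ones string for every $k\ge 2$. Having equal views, they also have equal $B_k$, so their tapes coincide forever. Their positions always differ by the translation $x\mapsto x+1$, so they never meet.

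Thus the critical segment that Lemma~\ref{th:sym} needs does not exist on this feasible instance. This is not only a hole in your argument: the statement itself fails for $AlgorithmFootprint$ as described. Repairing it would require a different exploration or labeling scheme, not a more careful write-up.
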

\begin{proof}
Consider a feasible instance $(G,u,v)$ in the model with footprints. Suppose the agents start simultaneously. From Proposition~\ref{condition}, it follows that a feasible instance can have either non-symmetric bases or symmetric bases. If the bases are non-symmetric, then rendezvous is guaranteed by Lemma~\ref{th:nonsym}. In case bases are symmetric, then rendezvous is guaranteed by Lemma~\ref{th:sym}. This proves the theorem.
\end{proof}

The following corollary implied by Theorem~\ref{univ} is the main result of this paper that answers the open question from the paper \cite{DP2026}.

\begin{corollary}
There exists a universal rendezvous algorithm for the class of all instances, where the underlying graph is any connected graph, finite or countably infinite, in the model with footprints.
\end{corollary}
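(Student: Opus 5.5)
The corollary follows immediately from Theorem~\ref{univ}: the theorem exhibits a concrete algorithm, $AlgorithmFootprint$, and asserts that it guarantees rendezvous for every feasible instance $(G,u,v)$ in the model with footprints, where $G$ is any connected graph, finite or countably infinite. Since existence of such an algorithm is exactly the claim of the corollary, the plan is to invoke Theorem~\ref{univ} and observe that $AlgorithmFootprint$ is a witness.

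To make the plan robust, I would also restate the reduction behind Theorem~\ref{univ}. By Proposition~\ref{condition}, every feasible instance has either non-symmetric bases, or symmetric bases together with a path along which the footprint statuses differ. These two cases are handled by Lemma~\ref{th:nonsym} and Lemma~\ref{th:sym} respectively. I would also check the following points:
\begin{itemize}
\item $AlgorithmFootprint$ needs no a priori knowledge of the instance, such as degree bounds or symmetry information; otherwise it would not be a single algorithm for the whole class.
\item It is well defined on countably infinite graphs. This holds because all degrees are finite, so for each $k$ the number $n_k$ of $k$-length paths from a base is finite, and each block $B_kF_k$ is a finite string that is built while it is being processed.
\end{itemize}

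The main obstacle does not lie in the corollary itself but in the soundness of the lemmas it inherits from.
\begin{itemize}
\item In the symmetric case, Lemma~\ref{th:sym} needs a level $y$ at which the footprint labels $F_y$ of the two agents differ. This label is computed from the footprint statuses produced by the algorithm's own explorations. Proposition~\ref{condition}, however, only supplies a differing path for some parallel run. I would therefore first verify that the lexicographic exploration order realizes such a difference.
\item Property~2 must then produce a bit 1 in the tape of $A$ aligned with a bit 0 in the tape of $B$, and this must happen during the traversal of a path that reaches $b_B$.
\end{itemize}
Assuming these points as established by the lemmas, the corollary is a direct restatement of Theorem~\ref{univ}.
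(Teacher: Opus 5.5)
Your proposal is correct and matches the paper, which derives the corollary directly from Theorem~\ref{univ}, with $AlgorithmFootprint$ as the witness for existence. The caveats you list concern the lemmas behind Theorem~\ref{univ}, not this step, and your caution is warranted: the paper goes straight from Proposition~\ref{condition} to the existence of a level $y$ with differing footprint labels $F_y$, without the verification you describe.
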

\section{Conclusion}

In the model with footprints, we proposed a universal rendezvous algorithm for the class of all instances where the underlying graph is any connected graph that is finite or countably infinite. This resolves the open problem of whether a universal rendezvous algorithm exists for the class of all instances in the model with footprints.

\end{document}